\documentclass[conference]{IEEEtran}

\usepackage[T1]{fontenc}  
\usepackage{amsmath,epsfig,amssymb,amsfonts,amstext,verbatim,amsopn,cite,subfigure}
\usepackage{multirow,multicol,lipsum,xfrac}
\usepackage{amsthm}
\usepackage{mathtools,amsthm}
\usepackage{url}
\usepackage{amsfonts}
\usepackage{epsfig}
\usepackage[font=small]{caption}
\usepackage{psfrag}	
\usepackage[Algorithm,ruled]{algorithm}
\usepackage{algorithmicx}
\usepackage{algpseudocode}
\usepackage{pifont}
\usepackage[utf8]{inputenc}
\usepackage[nolist]{acronym}
\usepackage{paralist}
\usepackage{enumitem}
\usepackage{dsfont}
\usepackage[process=auto]{pstool}
\usepackage{float}

\usepackage{tikz,pgfplots}
\usetikzlibrary{shapes.misc,arrows,positioning}


\newcommand{\setR}{\mathds{R}}

\newcommand{\setZ}{\mathds{Z}}

\newcommand{\setI}{\mathds{I}}

\newcommand{\Ex}[2]{ \mathbb{E}_{#2} \left\lbrace #1 \right\rbrace }

\newcommand{\E}{\mathrm{E}}
\newcommand{\B}{\mathrm{B}}

\newcommand{\tK}{{\tilde{K}}}
\newcommand{\bk}{\mathbf{k}}
\newcommand{\bts}{\tilde{\mathbf{s}}}

\newcommand{\rmS}{\mathrm{S}}

\newcommand{\maz}{\mathcal{Z}}

\newcommand{\mav}{\mathcal{V}}

\newcommand{\mai}{\mathcal{I}}

\newcommand{\mac}{\mathcal{C}}

\newcommand{\mal}{\mathcal{L}}
\newcommand{\mPi}{\boldsymbol{\Pi}}

\newcommand{\bxx}{\mathbf{x}}

\newcommand{\bww}{\mathbf{w}}
\newcommand{\buu}{\mathbf{u}}

\newcommand{\byy}{\mathbf{y}}

\newcommand{\rr}{\mathrm{r}}

\newcommand{\rms}{\mathrm{s}}

\newcommand{\xx}{\mathrm{x}}
\newcommand{\sgn}[1]{\mathrm{sgn} \left( #1 \right)}
\newcommand{\kk}{\mathrm{k}}

\newcommand{\set}[1]{\left\lbrace#1\right\rbrace}

\newcommand{\brc}[1]{\left( #1 \right) }
\newcommand{\inner}[1]{\left\langle #1 \right\rangle }

\newcommand{\dbc}[1]{\left[ #1 \right] }

\newcommand{\bs}{{\mathbf{s}}}
\newcommand{\br}{{\mathbf{r}}}

\newcommand{\bv}{{\mathbf{v}}}

\newcommand{\trp}{\mathsf{T}}

\newcommand{\mA}{\mathbf{A}}

\newcommand{\md}{\mathrm{D}}

\newcommand{\norm}[1]{\lVert #1 \rVert}

\newcommand{\abs}[1]{\lvert #1 \rvert}

\newtheoremstyle{mystyle}
{}
{}
{\it}
{}
{\bfseries}
{:}
{ }
{}

\theoremstyle{mystyle}

\newtheorem{definition}{Definition}
\newtheorem{proposition}{Proposition}

\newtheorem{result}{Result}

%
%
%
%
%
%
%

%

\newcounter{bar}




\hyphenation{op-tical net-works semi-conduc-tor}

\begin{document}

\begin{acronym}
	\acro{mimo}[MIMO]{multiple-input multiple-output}
	\acro{csi}[CSI]{channel state information}
	\acro{awgn}[AWGN]{additive white Gaussian noise}
	\acro{iid}[i.i.d.]{independent and identically distributed}
	\acro{uts}[UTs]{user terminals}
	\acro{bs}[BS]{base station}
	\acro{tas}[TAS]{transmit antenna selection}
	\acro{glse}[GLSE]{generalized least square error}
	\acro{rhs}[r.h.s.]{right hand side}
	\acro{lhs}[l.h.s.]{left hand side}
	\acro{wrt}[w.r.t.]{with respect to}
	\acro{rs}[RS]{replica symmetry}
	\acro{rsb}[RSB]{replica symmetry breaking}
	\acro{np}[NP]{non-deterministic polynomial-time}
	\acro{papr}[PAPR]{peak-to-average power ratio}
	\acro{rzf}[RZF]{regularized zero forcing}
	\acro{snr}[SNR]{signal-to-noise ratio}
	\acro{rem}[REM]{random energy model}
	\acro{mf}[MF]{matched filtering}
	\acro{gamp}[GAMP]{generalized AMP}
	\acro{amp}[AMP]{approximate message passing}
	\acro{vamp}[VAMP]{vector AMP}
	\acro{map}[MAP]{maximum-a-posterior}
	\acro{ml}[ML]{maximum likelihood}
	\acro{mmse}[MMSE]{minimum mean squared error}
	\acro{ap}[AP]{average power}
	\acro{ldgm}[LDGM]{low density generator matrix}
	\acro{tdd}[TDD]{time division duplexing}
	\acro{rss}[RSS]{residual sum of squares}
	\acro{rls}[RLS]{regularized least-squares}
	\acro{ls}[LS]{least-squares}
	\acro{erp}[ERP]{encryption redundancy parameter}
	\acro{ra}[RA]{reflect-array}
	\acro{ta}[TA]{transmit-array}
	\acro{bsc}[BSC]{binary symmetric channel}
\end{acronym}

\title{Secure Coding via Gaussian Random Fields}

\author{
	\IEEEauthorblockN{
		Ali Bereyhi\IEEEauthorrefmark{1},
		Bruno Loureiro\IEEEauthorrefmark{2},
		Florent Krzakala\IEEEauthorrefmark{2},
		Ralf R. M\"uller\IEEEauthorrefmark{1}, and 
		Hermann Schulz-Baldes\IEEEauthorrefmark{3}
	}
	\IEEEauthorblockA{
		\IEEEauthorrefmark{1}Institute for Digital Communications, Friedrich-Alexander Universit\"at Erlangen-N\"urnberg (FAU)\\
		\IEEEauthorrefmark{2}Information, Learning and Physics Lab, École Polytechnique Fédérale de Lausanne (EPFL)\\
		\IEEEauthorrefmark{3}Mathematical Physics and Operator Algebras Group, Department of Mathematics, FAU\\
		ali.bereyhi@fau.de, bruno.loureiro@epfl.ch, florent.krzakala@epfl.ch, ralf.r.mueller@fau.de, schuba@mi.uni-erlangen.de
		\thanks{This work has been accepted for presentation in 2022 IEEE International Symposium on Information Theory (ISIT) in Espoo, Finland. The link to the final version in the Proceedings of ISIT will be available later.}
		\thanks{This work was supported by the Emerging Talents Initiative (ETI) at  FAU.}
	}
}

\IEEEoverridecommandlockouts
\maketitle

\begin{abstract}
Inverse probability problems whose generative models are given by \textit{strictly nonlinear} Gaussian random fields show the \textit{all-or-nothing} behavior: There exists a critical rate at which Bayesian inference exhibits a \textit{phase transition}. Below this rate, the optimal Bayesian estimator recovers the data perfectly, and above it the recovered data becomes uncorrelated.~This~study uses the replica method from the theory of spin glasses to show~that this critical rate is the channel capacity. This interesting finding has a particular application to the problem of secure transmission: A {strictly} nonlinear Gaussian random field along with random binning can be used to securely encode a confidential message in a wiretap channel. Our large-system characterization demonstrates that this secure coding scheme asymptotically achieves the secrecy capacity of the Gaussian wiretap channel.
\end{abstract}

\begin{IEEEkeywords}
Nonlinear Gaussian random fields, information-theoretic secrecy, replica method, decoupling principle.
\end{IEEEkeywords}

\IEEEpeerreviewmaketitle

\section{Preliminaries}
Unlike linear models \cite{barbier2016mutual,barbier2018mutual,bereyhi4,bereyhi1Extension,barbier2020mutual,bereyhi2020thesis,barbier2021performance}, nonlinear models are sparsely studied in information theory. This follows from the fact that these models are more complex and seemly have a narrower scope of applications. They are however rather popular in physics; for instance, in the theory of spin glasses \cite{edwards1975theory,mezard1987spin,panchenko2013sherrington,talagrand2006free}. Invoking connections between statistical mechanics~and~information theory, a few studies establish known nonlinear models in the theory of spin glasses to address information-theoretic problems, such as channel coding and data encryption \cite{sourlas1989spin,kabashima1999statistical,fyodorov2019spin}. This study brings light to the secrecy potential of nonlinear models.

\subsection{Contributions and Related Work}
The main contribution of this study is as follows: 
We show that the secrecy capacity of the Gaussian wiretap channel given in \cite{leung1978gaussian} is achieved by employing a \textit{strictly nonlinear} Gaussian random field as the encoder. 
The motivation comes from a recent work by Fyodorov \cite{fyodorov2019spin}, where a Gaussian random field is used for signal encryption. In particular, Fyodorov shows that encryption via a \textit{purely} quadratic Gaussian random field shows an asymptotic \textit{phase transition} at a threshold \ac{snr}, below which recovery via the method of least-squares becomes uncorrelated. Our initial investigations in \cite[Section~V]{bereyhi2022SecureLearning} shows that combining Fyodorov's encryption with a simple sphere coding technique can achieve a perfect secrecy rate close to the secret capacity of the wiretap channel. This interesting finding motivates a new coding scheme to achieve \textit{reliable} and \textit{perfectly secure} transmission over the Gaussian wiretap channel. 

Intuitively, the proposed encoder can be seen as a combination of Sourlas' coding \cite{sourlas1989spin} with the random linear binning technique \cite{wyner1974recent}. Decoding is performed via a standard Bayesian decoder. Using the replica method, we show that the Bayesian decoder asymptotically exhibits the \textit{all-or-noting} behavior \cite{barbier2020all,niles2021all,maillard2020phase} when a \textit{strictly nonlinear} Gaussian random field is used as the encoder. This finding is then used to show that the proposed coding scheme asymptotically achieves the secrecy capacity of the Gaussian wiretap channel.
\subsection{Notation}
Scalars, vectors and matrices are represented with non-bold, bold lower-case and bold upper-case letters, respectively. The transposed of $\mA$ is indicated by $\mA^{\trp}$. The Euclidean norm of $\bxx$ is denoted by $\norm{\bxx}$. For $K$-dimensional vectors $\bxx$ and $\byy$, we define the \textit{normalized inner-product} as
\begin{align}
	\inner{\bxx;\byy} = \frac{\bxx^\trp \byy}{K}. \label{eq:inner}
\end{align}
The notation $\log\brc{\cdot}$ indicates the \textit{natural} logarithm, and $\Ex{\cdot}{}$ denotes mathematical expectation. The real axis is shown by $\setR$. For sake of brevity, $\set{1,\ldots,N}$  and $\set{N_0,N_0+1,\ldots,N_1}$ are abbreviated as $\dbc{N}$ and $\dbc{N_0:N_1}$, respectively. The capacity of a real Gaussian channel with \ac{snr} $x$ is denoted~by
\begin{align}
	\mac\brc{x} = \frac{1}{2} \log\brc{1+x}.
\end{align}
%

\subsection{Gaussian Random Fields}
Gaussian random fields are key components in this paper. We hence define them at this point: In a nutshell, a Gaussian random field is a randomized mapping whose output entries are Gaussian. The mapping is not necessarily linear, i.e. a random matrix, and can be of higher orders. It can hence be observed as an extension of Gaussian random matrices. The exact definition is given below.
\begin{definition}
	The mapping $\mav\brc{\cdot}: \setR^K \mapsto \setR^N$ is a Gaussian random field with covariance function $\Phi\brc\cdot : \setR\mapsto \setR$, if for any pair of vectors $\bs_1,\bs_2 \in \setR^K$, the entries of 
	\begin{align}
		\mav\brc{\bs_i} = \dbc{ \mav_1\brc{\bs_i} , \ldots , \mav_N\brc{\bs_i} }^\trp,
	\end{align}
for $i\in\set{1,2}$, are distributed Gaussian and satisfy
	\begin{align}
		\Ex{\mav_m\brc{\bs_1} \mav_n\brc{\bs_2} }{ } = \setI\set{ m = n} \Phi\brc{ \inner{\bs_1;\bs_2} } ,
	\end{align}
	for $n,m\in\dbc{N}$, with $\setI\set{ m \hspace*{-.5mm}= \hspace*{-.5mm} n}$ being the indicator function.
\end{definition}
A basic example of a Gaussian random field is the \textit{linear} field with covariance function $\Phi\brc{u} = u $. The Gaussian field in this example can be represented as $\mav\brc{\bs} = \mA \bs$, where $\mA\in\setR^{N\times K}$ is an \ac{iid} Gaussian matrix whose entries are zero-mean with variance $1/K$. More examples can be found in \cite{bereyhi2022SecureLearning}.


\section{Statement of Main Result}
Consider a Gaussian wiretap channel in which a transmitter intends to securely send a message of $K=NR$ bits to a legitimate receiver over an \ac{awgn} channel by $N$ uses of the channel. An eavesdropper overhears the transmitted signal through an independent \ac{awgn} channel.

Let $M\in \dbc{2^{NR}}$ denote the secret message. The transmitter encodes $M$ via a \textit{secure} encoder %
	$f_N \brc\cdot : \dbc{2^{NR}}\mapsto \setR^N$ %
into the codeword $\bxx = \dbc{\xx_1, \ldots , \xx_N}^\trp$, such that
\begin{align}
	\frac{1}{N} \sum_{n=1}^N \Ex{\abs{\xx_n}^2}{} \leq P,
\end{align}
for some \textit{average transmit power} $P$. It then transmits the codeword over the \ac{awgn} channel using $N$~subsequent~transmission time intervals. The legitimate receiver therefore receives 
\begin{align}
	\byy = \bxx + \bww_\B,
\end{align}
for an \ac{iid} Gaussian noise vector $\bww_\B$ whose entries are zero-mean with variance $\sigma_\B^2$. The legitimate receiver employs the decoder  %
	$g_N \brc\cdot : \setR^N \mapsto \dbc{2^{NR}}$  %
to estimate the secret message, i.e., $\hat{M} = g_N\brc{\byy}$.

The eavesdropper overhears the transmitted sequence $\bxx$ over an independent \ac{awgn} channel and receives 
\begin{align}
	\byy_\E = \bxx + \bww_\E,
\end{align}
for \ac{iid} Gaussian noise $\bww_\E$ whose entries are zero-mean with variance $\sigma_\E^2$. It is further assumed that the encoder and decoder are publicly known to all parties.

\subsection{Reliable and Secure Transmission}
The secret message $M$ contains $K = NR$ information bits and is transmitted within $N$ channel uses. It is hence concluded that the \textit{transmission rate} in this setting is $R$ bits per channel use. We now focus on a sequence of encoder-decoder pairs indexed by $N$ which transmit secret messages with rate $R$. The transmission is said to be \textit{reliable} and \textit{secure} with this encoder-decoder sequence if the following conditions are satisfied \cite{liang2009information}:
\begin{itemize}
	\item The recovery error $\Pr\{ M \neq \hat{M} \}$ tends to zero as~the~dimension $N$ grows 
	large.
	\item The information leakage to eavesdropper, i.e., 
	\begin{align}
		L_N = \frac{1}{N} I \brc{ M ; \byy_\E } ,
	\end{align}
 	tends to zero as the dimension $N$ grows large. 
\end{itemize}


The \textit{secrecy capacity} of the wiretap channel is defined as the maximum reliable and secure transmission rate. Using the random binning approach of Wyner \cite{wyner1975wire}, the secrecy capacity of this Gaussian wiretap channel is shown to be \cite{leung1978gaussian}
\begin{align}
	C_{\rm S} =\dbc{ \mac \brc{\frac{P}{\sigma_\B^2}} - \mac \brc{\frac{P}{\sigma_\E^2}}}^{+} ,\label{eq:sec_Cap}
\end{align}
where $\dbc{x}^+ \coloneqq \max\set{0,x}$. 

\subsection{Secure Coding via Gaussian Random Fields}
\label{sec:Coding}
Invoking Gaussian random fields, we now propose a secure coding scheme. Our main result shows that under some heuristic assumptions, this scheme achieves the secrecy capacity. The proposed scheme uses a \textit{strictly nonlinear} Gaussian random field to securely encode the secret message. The decoder then recovers the message via the optimal Bayesian estimator.

To state the proposed coding scheme, let $\bs\in\set{\pm 1}^K$ be the bipolar representation of the message $M$, i.e., each information bit of $M$ is shown by $\pm 1$. For secure encoding, the transmitter follows the following steps:
\begin{enumerate}
	\item It generates at random
	\begin{align}
		\tK = \left\lceil \frac{N}{\log 2} \mac\brc{\frac{P}{\sigma_\E^2}}\right\rceil \label{eq:tK}
	\end{align}
	\ac{iid} \textit{uniform} bipolar symbols, i.e., 
		$\bk = \dbc{\kk_1, \ldots, \kk_{\tK}}^\trp$.
	\item It adds $\bk$ as a prefix to $\bs$ and finds $\bts$ as 
	$\bts = \mPi {\dbc{\bk^\trp, \bs^\trp}^\trp}$, %
where $\mPi$ %
is a $(K+\tK)\times (K+\tK)$ random permutation\footnote{One can see the connection between $\mPi$ and the random binning technique \cite{wyner1974recent,cover1975proof}; see also \cite{yassaee2014achievability,muramatsu2012construction} and references therein for more details on the random binning technique and its applications.} whose permuted vector $\bts$ has the following property: Let the $\ell$-th \textit{bin} of $\bts$ with size $B = \lceil1+K/\tK \rceil$ be 
\begin{align}
	\dbc{\bts}_\ell =  \set{ \tilde{\rms}_k: k\in \dbc{\brc{\ell-1} B +1 : \ell B } } .
\end{align}
Then, for every choice of $\ell \in [\tK]$, there exists only one symbol of $\bk$ in $\dbc{\bts}_\ell$. 
\item It determines the codeword as %
		$\bxx =  \mav\brc{\bts}$,
	for a Gaussian field $\mav\brc{\cdot}: \setR^{K+\tK} \mapsto \setR^N$ whose covariance function is $\Phi\brc{u} = P u^\lambda$ for some $\lambda \geq 3$.
\end{enumerate}

For recovery, the legitimate receiver applies \ac{mmse} estimation to recover the message. To this end, the legitimate receiver follows the following steps:
\begin{enumerate}
	\item It calculates the sufficient statistic $\tilde{\br}\in \setR^{K+\tK} $ from $\byy_\B$ as %
		$\tilde{\br} = \Ex{\tilde{\bs} \vert \byy_\B , \mav}{}$. %
With uniform prior distribution, the \ac{mmse} estimator is given by
\begin{align}
	\tilde{\br} = \frac{1}{\maz} \sum_{\buu \in \set{\pm1}^{K+\tK} } \buu \exp\set{-\frac{\norm{\byy_\B - \mav\brc{\buu}}^2}{2 \sigma_\B^2}},
\end{align}
for normalization $\maz$ defined as
\begin{align}
{\maz} =  \sum_{\buu \in \set{\pm1}^{K+\tK} } \exp\set{-\frac{\norm{\byy_\B - \mav\brc{\buu}}^2}{2 \sigma_\B^2}}.
\end{align}
\item It sets $\br = \mPi^\trp \tilde{\br}$, finds 
	$\hat{\rms}_k = \sgn{\rr_{\tK+k}}$ %
for $k\in \dbc{K}$ with $\sgn{\cdot}$ being the sign operator,
and finds $\hat{M}$ from $\hat{\bs}$.
\end{enumerate}

\subsection{The Main Result}
The main result of this study states that the proposed coding scheme achieves the secrecy capacity of the wiretap channel. We present this result explicitly below:
\begin{result}
	\label{res:1}
	Consider the sequence $\set{ \brc{f_N ,g_N }: N\in \setZ }$, where $f_N$ and $g_N$ denote the encoder and decoder in Section~\ref{sec:Coding}, respectively. Set $K= N C_{\rm S}/\log 2$, where $C_{\rm S}$ is the secrecy~capacity of the wiretap channel. Then, the transmission via this sequence is asymptotically reliable and secure.
\end{result}

\begin{proof}
	The proof is given in Section~\ref{sec:Proof}.
\end{proof}

Considering Result~\ref{res:1}, it is worth mentioning few remarks:
\begin{itemize}
	\item We first note the constraint on the covariance of $\mav\brc{\cdot}$, i.e.,~$\lambda \geq 3$.~It indicates that the codewords are generated via a mapping whose total number of coefficients grows with the message length at least cubically. In general, using a Gaussian field with covariance function $\Phi\brc{u} = u^\lambda$, the codewords are generated via an order $\lambda$ polynomial from a set of $NK^\lambda$ coefficients.
	\item The proposed scheme is randomized, as the mapping $\mav\brc{\cdot}$ is generated at random. Nevertheless, unlike the classical random coding which generates all $N2^K$ components of the codebook at random, the proposed~scheme~uses~$NK^\lambda$ random coefficients to specify $\mav\brc{\cdot}$. It then~constructs~the codewords from the messages using $\mav\brc{\cdot}$.
	\item The key property of the proposed scheme which leads to achieving the secrecy capacity of the wiretap channel is the \textit{strict non-linearity} of the Gaussian field\footnote{This becomes clear later on.}. In fact, by adding a linear term to the encoding field, one can observe that the information leakage to the eavesdropper does not vanish asymptotically. This is justified by a rather known property of the linear model: \textit{Linear} models \textit{always} carry information about the model parameters\footnote{See \cite[Section~IV]{bereyhi2022SecureLearning} for some discussions in this respect.}.
\end{itemize}


\section{Asymptotics of the Coding Scheme}
The derivation of Result~\ref{res:1} relies on the asymptotic characterization of a class Bayesian algorithms used for unsupervised learning in nonlinear generative models. The detailed derivations are given in the extended manuscript \cite{bereyhi2022SecureLearning}. In the sequel, we state a particular form of the generic result in \cite{bereyhi2022SecureLearning} which describes the asymptotic properties of the decoder when it is employed to decode a message encoded via a Gaussian random field. This result is then utilized to sketch a proof for  Result~\ref{res:1}. 

\subsection{Asymptotic Characterization of the Bayesian Decoder}
\label{sec:vec}
The statistics of the sufficient statistic $\tilde{\br}$ are asymptotically described via the results\footnote{The characterization in \cite{bereyhi2022SecureLearning} invokes the replica method.} of \cite{bereyhi2022SecureLearning}. We illustrate the asymptotic characterization through the following setting:~Consider~a~vector of uniform bipolar symbols $\bs_0 \in \set{\pm 1}^K$ which is mapped via a Gaussian random field $\mav\brc\cdot$ into $\bxx_0\in\setR^N$. The mapped vector $\bxx_0$ is observed as $\byy_0 = \bxx_0 + \bww_0$ for Gaussian noise $\bww_0$ whose entries are zero-mean with variance $\sigma_0^2$. Let $\br_0$ be the optimal \ac{mmse} estimation of $\bs_0$ from the observation $\byy_0$ with side information $\mav\brc\cdot$, i.e.,  
\begin{align}
	\br_0 = \Ex{\bs_0 \vert \byy_0 , \mav}{} = \sum_{\bs_0 \in \set{\pm1}^{K}} \bs_0 p\brc{\bs_0\vert \byy_0 , \mav}  ,
\end{align}
where the posterior distribution $p\brc{\bs_0\vert \byy_0 , \mav}$ is determined for the \textit{true} prior belief on $\bs_0$ and the \textit{true} noise variance $\sigma_0^2$. From the asymptotic results of \cite{bereyhi2022SecureLearning}, we can derive the following metrics of this setting in the asymptotic regime, i.e., $N,K\uparrow\infty$ with bounded $R = K/N$:
\begin{enumerate}
	\item The asymptotic \textit{information rate} which is defined as
		\begin{align}
		\mai\brc{\sigma_0} = \lim_{N\uparrow \infty}  \frac{1}{N}I\brc{\bs_0;\byy_0\vert \mav }.\label{eq:MI_F}
	\end{align}
\item The asymptotic joint distribution of an encoded-decoded pair, i.e., $\brc{\rms_{0k} , \rr_{0k}}$ for $k\in\dbc{K}$.
\end{enumerate}

The final expressions for these metrics are given in closed forms in terms of the so-called \textit{decoupled} setting. In the sequel, we first define the decoupled setting and then give the closed-form expression for the above metrics.

\subsection{Decoupled Setting}
Corresponding to the vector-valued setting in Section~\ref{sec:vec}, we define the decoupled setting as follows: Consider~the~uniformly distributed $s\in\set{\pm 1}$. This symbol is passed through an \ac{awgn} channel whose noise variance is controlled by the \textit{parameter} $m \in \dbc{0,1}$. Namely, the observation is given by
\begin{align}
	y\brc{m} = s + \frac{w}{\sqrt{E\brc{m}}},
\end{align}
where $w$ is zero-mean and unit-variance Gaussian noise, and
\begin{align}
E\brc{m} = \frac{\Phi' \brc{m}}{R\brc{\sigma_0^2 + \Phi\brc{1} - \Phi\brc{m} }},
\end{align}
with $\Phi\brc{\cdot}$ denoting the covariance function of $\mav\brc{\cdot}$.

In this setting, the \ac{mmse} estimator of $s$ is given by %
\begin{align}
	r\brc{m} = \tanh \brc{E\brc{m} y\brc{m} } . 
\end{align}
The input-output mutual information, i.e., 
\begin{align}
	I_{\md} \brc{m}= I \brc{s ; y\brc{m}},
\end{align}
is further given by
\begin{align}
\hspace*{-1mm}	I_{\md}\brc{m} \hspace*{-.7mm}=\hspace*{-.7mm} E\brc{m} \hspace*{-.7mm}- \hspace*{-.7mm} \Ex{\log\cosh \brc{E\brc{m} \hspace*{-.7mm}+\hspace*{-.7mm} \sqrt{E\brc{m}} w } }{w}\hspace*{-.7mm}. \hspace*{-.7mm}
\end{align}

We now define a new metric which is controlled by $m$: The \textit{energy function} is defined as
\begin{align}
\mal \brc{m} =    R I_{\md}\brc{m} + C_{\md} \brc{m} + \brc{1-m}  C'_{\md} \brc{m},
\end{align}
for the function $C_{\md}\brc{m}$ which is given by
\begin{align}
	C_{\md} \brc{m} = \mac\brc{ \frac{\Phi\brc{1} - \Phi \brc{m} }{\sigma_0^2} }.
\end{align}

\begin{proposition}
	\label{prop:1}
Let $m^\star$ be the minimizer of the energy function $\mal \brc{m}$ over $\dbc{0,1}$. Then, $\brc{\rms_{0k} , \rr_{0k}}$ for $k\in\dbc{K}$ converges~in distribution to the decoupled pair $\brc{s,r\brc{m^\star}}$, and the asymptotic information rate is given by %
	$\mai\brc{\sigma_0} = \mal \brc{m^\star}$.
\end{proposition}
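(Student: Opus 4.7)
The plan is to establish Proposition~\ref{prop:1} via the replica method from statistical mechanics, following the framework outlined in \cite{bereyhi2022SecureLearning}. The starting point is to write the posterior expectations of interest as derivatives of the log-partition function $\log \maz$, where $\maz$ aggregates the Boltzmann weights $\exp\set{-\norm{\byy_0 - \mav\brc{\bs}}^2/(2\sigma_0^2)}$ over $\bs\in\set{\pm1}^K$. Using the replica identity $\Ex{\log\maz}{} = \lim_{n\downarrow 0} (\Ex{\maz^n}{} - 1)/n$, the task reduces to evaluating $\Ex{\maz^n}{}$ over the randomness of the Gaussian field $\mav\brc{\cdot}$ and of the noise for integer $n$, and then analytically continuing to $n\to 0$.

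A key simplification is that, because $\mav\brc\cdot$ is a Gaussian random field, for any two replicas $a,b \in [n+1]$ (with index $0$ reserved for the planted signal $\bs_0$), the joint distribution of $\mav\brc{\bs^a}$ and $\mav\brc{\bs^b}$ depends only on the normalized inner product $Q_{ab} = \inner{\bs^a ; \bs^b}$ through the covariance $\Phi\brc{Q_{ab}}$. Integrating out the Gaussian field therefore reduces $\Ex{\maz^n}{}$ to an integral over an $(n+1)\times(n+1)$ overlap matrix $\mQ$, of the form $\int \exp\set{K \Psi\brc{\mQ;n}} \dif \mQ$, where $\Psi$ splits into an energetic term originating from $\Phi$ and the noise variance, and an entropic term counting configurations with prescribed overlaps.

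In the large-$K$ regime, a saddle-point evaluation is applied. Nishimori's identity, which holds because the posterior is computed with the true prior and the true noise variance, collapses the relevant order parameter to the single overlap $m = \inner{\bs_0 ; \bs}$ between a posterior sample and the planted signal, and makes the replica-symmetric ansatz $Q_{ab}=m$ for $a\neq b$ self-consistent. Substituting this ansatz and taking $n\to 0$ produces precisely the energy function $\mal\brc{m}$ as the free-energy surface: the terms $C_{\md}\brc{m}+(1-m)C'_{\md}\brc{m}$ arise from the Gaussian-field integration, while $R\, I_{\md}\brc{m}$ captures the per-symbol entropy in an effective scalar channel with \ac{snr} $E\brc{m}$. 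The dominant saddle is thus the minimizer $m^\star$ of $\mal$ over $[0,1]$, and the identity $\mai\brc{\sigma_0}=\mal\brc{m^\star}$ follows directly from the replica free-energy formula.

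The decoupling part of the statement is then obtained by a standard perturbation argument: inserting a small external field coupled to one coordinate of $\bs_0$ and differentiating the free energy with respect to its strength shows that the marginal law of $\brc{\rms_{0k},\rr_{0k}}$ is governed by a scalar Gaussian channel with precision $E\brc{m^\star}$, which is exactly the decoupled setting defined in Section~\ref{sec:vec}. The main obstacle in such a plan is the justification of replica symmetry: for a strictly nonlinear $\Phi$, replica symmetry breaking must be ruled out by verifying a de~Almeida--Thouless type local stability condition and checking that the symmetric saddle globally dominates $\mal$ over $[0,1]$. Both are facilitated by the Nishimori symmetry inherent to the Bayes-optimal setting, while a fully rigorous confirmation via an interpolation argument is deferred to the extended manuscript \cite{bereyhi2022SecureLearning}.
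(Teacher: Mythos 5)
Your proposal is correct in substance and follows essentially the same route as the paper, whose own ``proof'' is simply a citation of Corollary~1 and Proposition~2 in the extended manuscript \cite{bereyhi2022SecureLearning}, where exactly this replica-method computation (Gaussian-field integration over overlaps, replica-symmetric saddle point under the Nishimori symmetry, and the perturbative decoupling argument) is carried out. Your sketch accurately reconstructs that derivation, including its heuristic status.
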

\begin{proof}
	The proof is directly concluded from Corollary~1 and Proposition~2 in the extended manuscript \cite{bereyhi2022SecureLearning}.
\end{proof}

\section{All-or-Nothing Phenomenon}
\label{sec:all}
The parameter $m^\star$ in Proposition~\ref{prop:1} is often referred to as the \textit{overlap}. This appellation comes from the fact that~$m^\star$~gives the expected normalized inner product between $\bs_0$ and $\br_0$: As $m^\star$ is a minimizer of the energy function\footnote{$\mal \brc{m}$ is in fact the free energy of the corresponding spin glass; see \cite{bereyhi2022SecureLearning}.}, we have $\mal'\brc{m^\star} = 0$ as a necessary condition. This leads to
\begin{subequations}
		\begin{align}
	m^\star&=  \Ex{ \tanh \brc{ \sqrt{E\brc{m^\star}} w + E\brc{m^\star} }  }{w},\\
	& = \Ex{ r\brc{m^\star}  s }{} \stackrel{\mathrm{(a)}}{=} \Ex{\inner{\bs_0;\br_0}}{},
\end{align}
\end{subequations}
where $\mathrm{(a)}$ follows from Proposition~\ref{prop:1} considering the convergence of $\brc{\rms_{0k} , \rr_{0k}}$ to $\brc{s,r\brc{m}}$ in distribution. As the result, $m^\star$ bounds the fraction of bipolar symbols in $\bs_0$ which are correctly decoded after hard-thresholding\footnote{Which is asymptotically tight.} $\br_0$.

Proposition~\ref{prop:1} gives an interesting finding about the overlap. To illustrate this finding, we focus on the special form of~fields used in the proposed coding scheme, i.e., a Gaussian field with covariance function $\Phi\brc{u} = Pu^\lambda$. For these Gaussian fields, we use Proposition~\ref{prop:1} to investigate the behavior of $m^\star$ against $R$. This leads to the following conclusions:
\begin{itemize}
	\item For $\lambda = 1$, i.e., linear fields, the overlap starts from $m^\star=1$ at small rates, i.e., $R\downarrow 0$, and continuously decreases by growth of $R$. It however never reaches zero, i.e., $m^\star\neq 0$ for any choice of $R$.
	\item For $\lambda = 2$, i.e., purely quadratic fields, the overlap starts from $m^\star=1$ at small rates, i.e., $R\downarrow 0$, and shows a first-order phase transition at a critical rate $R^\star$ at which the overlap jumps from $m^\star = 1$ to $0\neq m^\star < 1$ discontinuously. It then shows a second-order phase transition at a threshold rate $R_{\rm Th}$ at which	$m^\star = 0$ for $R \geq R_{\rm Th}$.
	\item For $\lambda \geq 3$, the overlap shows a first-order phase transition at the critical rate $R^\star$ at which the overlap jumps from $m^\star = 1$ to $m^\star = 0$ and remains zero for $R \geq R^\star$.
\end{itemize}

The detailed proof of the above findings is out of the scope of a conference paper and can be followed in \cite[Section~IV]{bereyhi2022SecureLearning}. 

The above results lead to this conclusion: For $\lambda \geq 3$, the energy function $\mal \brc{m}$ has two local minima at $m=0$ and $m=1$ and a local maximum at some $0 < m_0 <1$. For $R < R^\star$, the global minimum is at $m=1$, leading to $m^\star=1$. This means that for $R < R^\star$, $\br_0$ perfectly recovers $\bs_0$. For $R \geq R^\star$, the overlap is zero meaning that $\br_0$ and $\bs_0$ are uncorrelated. This behavior is often called \textit{all-or-nothing} phenomenon and is observed in various inference problems, e.g., see \cite{niles2021all,barbier2020all,maillard2020phase}.

\subsection{Heuristic Derivation of the Critical Rate $R^\star$}
\label{sec:heur}
A rigorous derivation of the critical rate faces complications. However, using the all-or-noting phenomenon, the rate can be derived heuristically: Noting that the global minimum occurs at $m=0$ or $m=1$, one can compare the energy function at these two points. 

As $\Phi\brc{0} = \Phi'\brc{0}=0$, we have %
	$\mal \brc{0} =\mac\brc{ {P }/{\sigma_0^2} }$; %
however, an explicit calculation of the energy function at $m=1$ is not trivial. We hence use the following approximation: For ${x} \gg0$, we can approximately write %
	$\log \cosh\brc{x} \approx {x} - \log 2$.

Now, let us define random variable $ \tilde{w} = E\brc{1} +  \sqrt{E\brc{1}} w$. Since $\tilde{w}$ is a Gaussian random variable with mean and variance $E\brc{1}>0$, the probability of $\tilde{w}$ being close to zero or negative is negligible. We hence use the above approximation and write 
\begin{align}
	\Ex{\log \cosh\brc{\tilde{w}}}{\tilde{w}} \approx \Ex{\tilde{w}}{} - \log 2 = E\brc{1} - \log 2.
\end{align}
Hence~%
	$\mal \brc{1} \approx R\log 2$. %
This leads to the conclusion that
\begin{align}
	R^\star =   \dfrac{ 1}{\log 2} \mac\brc{ \frac{P }{\sigma_0^2} }.
\end{align}

The all-or-nothing phenomenon is further observed in terms of the information rate: At $R^\star$, the information rate changes from $\mai\brc{\sigma_0} = R\log 2$ (at $m^\star = 1$) to $\mai\brc{\sigma_0} = \mac\brc{ P /\sigma_0^2}$ (at $m^\star = 0$). In the former case, the end-to-end channel is noise-less and hence the information rate equals to the entropy rate of the message; however, by the phase transition, the end-to-end channel becomes noisy and the information rate is restricted by the channel capacity. This finding indicates that by using a \textit{strictly nonlinear} Gaussian field\footnote{More precisely, a Gaussian random field whose covariance function has a polynomial expansion with cubic and/or larger terms.} as the encoder and an \ac{mmse} decoder, the coding scheme shows a sharp phase transition at the channel capacity. This result confirms the earlier findings reported by Sourlas in \cite{sourlas1989spin} and \cite{sourlas1994spin}.

\subsection{Numerical Validations}
To validate our heuristic derivations, we numerically determine the overlap and the asymptotic information rate, given by Proposition~\ref{prop:1}, and compare them with the derivations in Section~\ref{sec:heur}. To this end, we set $\sigma_0^2 = 0.1$ and $P=1$ and plot the overlap and information rate against $R$ for various choices of $\lambda$. The results are shown in Figs.~\ref{fig:1} and \ref{fig:2}. As the figures show, for $\lambda= 3$ the first-order phase transition occurs exactly at the heuristically derived rate. By plotting the figures for larger choices of $\lambda$, one can observe that the figures are not numerically distinguishable from the one given for $\lambda=3$; see \cite{bereyhi2022SecureLearning}. We hence conjecture that for $\lambda\geq 3$ the overlap jumps at $R^\star$ from \textit{exactly} being one to \textit{exactly} being zero. A rigorous proof of this conjecture is however skipped at this point.

\section{Derivation of the Main Result}
\label{sec:Proof}
We now invoke our asymptotic derivations to sketch a proof for Result~\ref{res:1}. To this end, we consider the proposed coding scheme in Section~\ref{sec:Coding} and show that it leads to a \textit{reliable} and \textit{secure} transmission, when we set $R = K/N = C_{\rmS}/\log 2$. For brevity, we focus on scenarios with non-zero secrecy rates, i.e., $\sigma_\B^2 > \sigma_\E^2$. Noting that the proof invokes derivations based on the replica method, it is natural that it contains some heuristics. For brevity, we use the notation $C_i \hspace*{-.5mm}=\hspace*{-.5mm} \mac\brc{{P}/{\sigma_i^2}}$ for $i\in\set{\E, \B}$. 

\subsection{Proof of Reliability}
Noting that $\hat{M}$ is specified by hard-thresholding of the last $K$ symbols of $\br$, we have
	$\Pr\{M\neq \hat{M}\} \leq \Pr\set{\bts\neq \sgn{\tilde{\br}}}$.
We now consider the end-to-end channel from $\bts$ to $\sgn{\tilde{\br}}$ and model it as a \ac{bsc}~whose~flipping probability is given by
\begin{align}
	f = \frac{1}{K+\tK} \sum_{k=1}^{K+\tK} \setI \set{\tilde{\rms}_k \neq \sgn{\tilde{\rr}_k} },
\end{align}
where $\setI \set{\cdot}$ denotes the indicator function.

As we transmit \textit{uncoded} over this end-to-end channel, the transmission is considered reliable if $f=0$. As the derivation of $f$ for a particular realization of the Gaussian field is not trivial, we invoke the averaging trick and find the average of $f$ over all realizations of the Gaussian field used at the encoder. We start the derivation by noting that %
${\inner{\bts ; \sgn{\tilde{\br}}} }{} = 1- 2{f}{}$.~On the other hand, we have
\begin{align}
\inner{\bts ; \sgn{\tilde{\br}}}  =  \frac{1}{K+\tK} \sum_{k=1}^{K+\tK} \setI \set{\tilde{\rms}_k = \sgn{\tilde{\rr}_k} } - f.
\end{align}
Noting that $\abs{\tilde{\rr}_k} \leq 1$, we can write
	\begin{align}
	\inner{\bts ; \sgn{\tilde{\br}}}  
	&\geq   \frac{1}{K+\tK} \sum_{k=1}^{K+\tK} \tilde{\rms}_k {\tilde{\rr}_k} - f = 	\inner{\bts ; \tilde{\br}} - f .
\end{align}
This concludes that $f \leq 1- \inner{\bts ; \tilde{\br}}$.

From Proposition~\ref{prop:1}, we know that $\Ex{\inner{\bts ; \tilde{\br}} }{}$ in the large-system limit is determined by the overlap\footnote{This is shown at the beginning of Section~\ref{sec:all}.}. We hence conclude that in the asymptotic limit, $\Ex{f}{} \leq 1- m^\star$. As the result, $m^\star = 1$ guarantees the existence of a Gaussian field by which a reliable transmission is possible.

The all-or-nothing phenomenon of Gaussian fields with $\lambda \geq 3$ indicates that $m^\star = 1$ is achievable when %
$R+ {C_\E}/{\log 2} \leq {C_\B}/{\log 2}$.
This proves the reliability for any $R \leq C_{\rmS}/\log 2$.

\begin{figure}
	\begin{center}
		\input{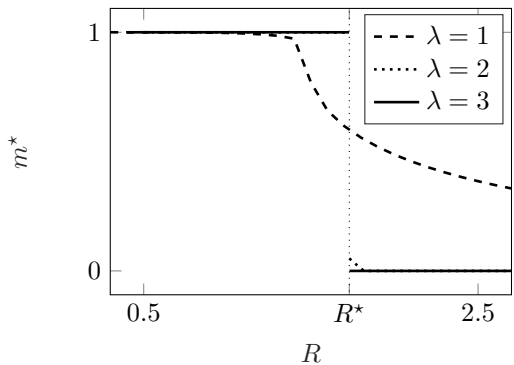}
	\end{center}
	\caption{Overlap $m^\star$ vs. $R$. The covariance function is set $\Phi\brc{u} = u^\lambda$.}
	\label{fig:1}
\end{figure}

\subsection{Proof of Security}
We start the security proof by writing\footnote{A more precise notation is to also include the random permutation in the condition, e.g., $I\brc{\bs ; \byy_\E \vert \mPi,\mav}$; however, we drop it for brevity. } 
\begin{align}
	I\brc{\bs ; \byy_\E \vert \mav}  &= I\brc{\tilde{\bs} ; \byy_\E \vert \mav}-  I\brc{\bk ; \byy_\E \vert \bs ,  \mav}. \label{eq:chain}
\end{align}
Since for $R > 0$, 
	$R+ {C_\E}/{\log 2} > {C_\E}/{\log 2}$, %
we conclude that 
\begin{align}
	\lim_{N\uparrow \infty} \frac{1}{N} I\brc{\tilde{\bs} ; \byy_\E \vert \mav} = C_\E,
\end{align}
based on the all-or-nothing phenomenon.

To calculate the second term, we first note that
\begin{align}
	I\brc{\bk ; \byy_\E \vert \bs ,  \mav} = I\brc{\bts ; \byy_\E \vert \bs ,  \mav}.
\end{align}
The right hand side is the mutual information in a genie-aided setting in which the eavesdropper recovers $\bts$ while the message $\bs$ is revealed to it: Let $\bs = \bv$ be a particular realization of the message being known to the eavesdropper. We note that after permutation there exists only one unknown bipolar symbol in each bin of $\bts$, i.e., the symbol of $\bk$ available in the bin. Thus, each bin describes a binary unknown in the inference problem. The effective transmission rate over the channel from $\bts$ to $\byy_\E$ in this case is $\tK/N = C_\E / \log 2$. We hence use Proposition~\ref{prop:1} and write\footnote{More precisely, one needs to employ an extended version of Proposition~\ref{prop:1} with a generic input distribution; however, as $\bts$ is discrete, one concludes the validity of the argument using the fact that the information rate only depends on the distribution. More discussions can be found in \cite{bereyhi2022SecureLearning}.}
\begin{align}
	\lim_{N\uparrow \infty} \frac{1}{N} I\brc{\bts ; \byy_\E \vert \bs = \bv ,  \mav} = C_\E .
\end{align}
As the right hand side does not depend on $\bv$, we have
\begin{align}
	\lim_{N\uparrow \infty} \frac{1}{N} I\brc{\bts ; \byy_\E \vert \bs  ,  \mav} = C_\E .
\end{align} 
This concludes concludes the security proof.

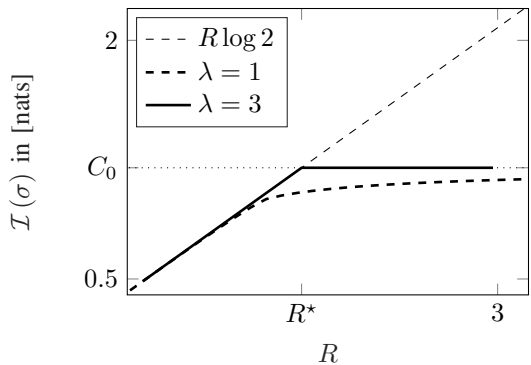
\begin{figure}
	\begin{center}
%
%
\definecolor{mycolor1}{rgb}{0.00000,0.44700,0.74100}%
\definecolor{mycolor2}{rgb}{0.85000,0.32500,0.09800}%
\definecolor{mycolor3}{rgb}{0.92900,0.69400,0.12500}%
\begin{tikzpicture}

	\begin{axis}[%
	width=2.1in,
	height=1.5in,
	at={(1.262in,0.703in)},
	scale only axis,
	xmin=.6,
	xmax=3.2,
	xlabel style={font=\color{white!15!black}},
	xlabel={$R $},
	xtick={0,1.72971580931865,3},
	xticklabels={{$0$},{$R^\star$},{$3$}},
	ymin=.4,
	ymax=2.2,
	ytick={.5,1.19894763639919,2},
	yticklabels={{$0.5$},{$C_0$},{$2$}},
	ylabel style={at={(-0.03,0.5)},font=\color{white!15!black}, },
	ylabel={$\mai\brc{\sigma}$ in [nats]},
	axis background/.style={fill=white},
	legend style={at={(0.4,.97)},legend cell align=left, align=left, draw=white!15!black}
	]
	
\addplot [color=black, dashed]
  table[row sep=crcr]{%
0.2	0.138629436111989\\
0.3	0.207944154167984\\
0.4	0.277258872223978\\
0.5	0.346573590279973\\
0.6	0.415888308335967\\
0.7	0.485203026391962\\
0.8	0.554517744447956\\
0.9	0.623832462503951\\
1	0.693147180559945\\
1.1	0.76246189861594\\
1.2	0.831776616671934\\
1.3	0.901091334727929\\
1.4	0.970406052783923\\
1.5	1.03972077083992\\
1.6	1.10903548889591\\
1.7	1.17835020695191\\
1.8	1.2476649250079\\
1.9	1.3169796430639\\
2	1.38629436111989\\
2.1	1.45560907917589\\
2.2	1.52492379723188\\
2.3	1.59423851528787\\
2.4	1.66355323334387\\
2.5	1.73286795139986\\
2.6	1.80218266945586\\
2.7	1.87149738751185\\
2.8	1.94081210556785\\
2.9	2.01012682362384\\
3	2.07944154167984\\
3.1	2.14875625973583\\
3.2	2.21807097779182\\
3.3	2.28738569584782\\
3.4	2.35670041390381\\
3.5	2.42601513195981\\
3.6	2.4953298500158\\
3.7	2.5646445680718\\
3.8	2.63395928612779\\
3.9	2.70327400418379\\
4	2.77258872223978\\
4.1	2.84190344029578\\
4.2	2.91121815835177\\
4.3	2.98053287640776\\
4.4	3.04984759446376\\
4.5	3.11916231251975\\
4.6	3.18847703057575\\
4.7	3.25779174863174\\
4.8	3.32710646668774\\
4.9	3.39642118474373\\
5	3.46573590279973\\
5.1	3.53505062085572\\
5.2	3.60436533891172\\
5.3	3.67368005696771\\
5.4	3.7429947750237\\
5.5	3.8123094930797\\
5.6	3.88162421113569\\
5.7	3.95093892919169\\
5.8	4.02025364724768\\
5.9	4.08956836530368\\
6	4.15888308335967\\
};
\addlegendentry{$R\log2$}

\addplot [color=black, dashed, line width = 1.0pt]
  table[row sep=crcr]{%
0.2	0.138629436111525\\
0.3	0.207944150627807\\
0.4	0.277258526954961\\
0.5	0.346567808970496\\
0.6	0.415848632511528\\
0.7	0.485040148269841\\
0.8	0.554033191815186\\
0.9	0.622668289330328\\
1	0.690733893987858\\
1.1	0.757954881494403\\
1.2	0.823958271235938\\
1.3	0.888182213208283\\
1.4	0.949579567363882\\
1.5	1.00197393364232\\
1.6	1.02455936361397\\
1.7	1.04076609414429\\
1.8	1.05362585501892\\
1.9	1.06425765235261\\
2	1.07327015988218\\
2.1	1.08104555028116\\
2.2	1.08784383349438\\
2.3	1.09385142925492\\
2.4	1.09920709716665\\
2.5	1.10401709613885\\
2.6	1.10836464372275\\
2.7	1.11231611643251\\
2.8	1.11592527391862\\
2.9	1.11923622806457\\
3	1.12228558449393\\
3.1	1.12510402115474\\
3.2	1.12771747384938\\
3.3	1.13014804110027\\
3.4	1.13241468467488\\
3.5	1.1345337787847\\
3.6	1.13651954551827\\
3.7	1.13838440358917\\
3.8	1.14013925023134\\
3.9	1.14179369097034\\
4	1.14335622834892\\
4.1	1.14483441803538\\
4.2	1.14623499879549\\
4.3	1.147564001359\\
4.4	1.14882684012167\\
4.5	1.15002839079484\\
4.6	1.15117305647928\\
4.7	1.15226482414824\\
4.8	1.15330731314128\\
4.9	1.15430381696885\\
5	1.15525733948963\\
5.1	1.15617062633212\\
5.2	1.1570461922802\\
5.3	1.15788634521937\\
5.4	1.15869320714102\\
5.5	1.15946873262079\\
5.6	1.16021472512102\\
5.7	1.16093285141219\\
5.8	1.16162465436353\\
5.9	1.16229156431518\\
6	1.16293490921312\\
};
\addlegendentry{$\lambda = 1$}

\addplot [color=black,dotted, forget plot]
  table[row sep=crcr]{%
0.2	1.19894763639919\\
6	1.19894763639919\\
};

\addplot [color=black, line width = 1.0pt]
table[row sep=crcr]{%
	0.7	0.485203026328893\\
	0.75	0.519860385128997\\
	0.8	0.554517743332759\\
	0.85	0.589175099807498\\
	0.9	0.623832451883395\\
	0.95	0.658489793926648\\
	1	0.69314711510415\\
	1.05	0.727804396159868\\
	1.1	0.762461605082891\\
	1.15	0.797118691627411\\
	1.2	0.831775580733311\\
	1.25	0.866432164881587\\
	1.3	0.901088296251281\\
	1.35	0.935743776897827\\
	1.371	0.95029882535362\\
	1.3735	0.952031557582671\\
	1.376	0.953764287200787\\
	1.3785	0.95549701416763\\
	1.381	0.957229738442415\\
	1.3835	0.958962459983855\\
	1.386	0.96069517875025\\
	1.3885	0.962427894699406\\
	1.391	0.964160607788703\\
	1.3935	0.965893317975009\\
	1.396	0.967626025214738\\
	1.3985	0.969358729463814\\
	1.401	0.971091430677717\\
	1.4035	0.97282412881145\\
	1.406	0.974556823819473\\
	1.4085	0.976289515655822\\
	1.411	0.978022204274026\\
	1.4135	0.979754889627124\\
	1.416	0.98148757166765\\
	1.4185	0.983220250347653\\
	1.421	0.984952925618686\\
	1.4235	0.986685597431784\\
	1.426	0.988418265737486\\
	1.4285	0.990150930485839\\
	1.431	0.991883591626362\\
	1.4335	0.993616249108062\\
	1.436	0.995348902879424\\
	1.4385	0.997081552888442\\
	1.441	0.998814199082556\\
	1.4435	1.00054684140871\\
	1.446	1.0022794798133\\
	1.4485	1.0040121142422\\
	1.451	1.00574474464077\\
	1.4535	1.00747737095378\\
	1.456	1.00920999312556\\
	1.4585	1.01094261109977\\
	1.461	1.01267522481966\\
	1.4635	1.01440783422785\\
	1.466	1.01614043926643\\
	1.4685	1.01787303987696\\
	1.471	1.01960563600042\\
	1.4735	1.02133822757724\\
	1.476	1.02307081454732\\
	1.4785	1.02480339684995\\
	1.481	1.02653597442393\\
	1.4835	1.02826854720738\\
	1.486	1.03000111513796\\
	1.4885	1.03173367815272\\
	1.491	1.0334662361881\\
	1.4935	1.03519878918003\\
	1.496	1.0369313370638\\
	1.4985	1.03866387977415\\
	1.501	1.04039641724523\\
	1.5035	1.0421289494106\\
	1.506	1.04386147620324\\
	1.5085	1.04559399755554\\
	1.511	1.04732651338558\\
	1.5135	1.0490590236496\\
	1.516	1.05079152826682\\
	1.5185	1.05252402716722\\
	1.521	1.0542565202802\\
	1.5235	1.05598900753452\\
	1.526	1.05772148885838\\
	1.5285	1.0594539641793\\
	1.531	1.06118643342424\\
	1.5335	1.06291889650412\\
	1.536	1.06465135337687\\
	1.5385	1.06638380395074\\
	1.541	1.06811624815018\\
	1.5435	1.069848685899\\
	1.546	1.07158111712043\\
	1.5485	1.07331354173701\\
	1.551	1.07504595967069\\
	1.5535	1.07677837084277\\
	1.556	1.07851077517389\\
	1.5585	1.08024317258409\\
	1.561	1.08197556299274\\
	1.5635	1.08370794631856\\
	1.566	1.08544032247962\\
	1.5685	1.08717269139336\\
	1.571	1.08890505297655\\
	1.5735	1.09063740714531\\
	1.576	1.0923697538151\\
	1.5785	1.09410209290072\\
	1.581	1.09583442431633\\
	1.5835	1.09756674797536\\
	1.586	1.09929906379064\\
	1.5885	1.1010313716743\\
	1.591	1.1027636715378\\
	1.5935	1.10449596329195\\
	1.596	1.10622824684684\\
	1.5985	1.10796052211189\\
	1.601	1.10969278899589\\
	1.6035	1.11142504740688\\
	1.606	1.11315729725225\\
	1.6085	1.11488953843868\\
	1.611	1.1166217708722\\
	1.6135	1.11835399445811\\
	1.616	1.12008620910102\\
	1.6185	1.12181841470484\\
	1.621	1.12355061117281\\
	1.6235	1.12528279840743\\
	1.626	1.12701497631052\\
	1.6285	1.12874714478318\\
	1.631	1.13047930372581\\
	1.6335	1.13221145303811\\
	1.636	1.13394359261903\\
	1.6385	1.13567572236683\\
	1.641	1.13740784217908\\
	1.6435	1.13913995195256\\
	1.646	1.14087205158342\\
	1.6485	1.14260414096698\\
	1.651	1.14433621999791\\
	1.6535	1.14606828857014\\
	1.656	1.14780034657684\\
	1.6585	1.14953239391047\\
	1.661	1.15126443046276\\
	1.6635	1.15299645612468\\
	1.666	1.15472847078647\\
	1.6685	1.15646047433765\\
	1.671	1.15819246666694\\
	1.6735	1.15992444766238\\
	1.676	1.16165641721122\\
	1.6785	1.16338837519998\\
	1.681	1.1651203215144\\
	1.6835	1.1668522560395\\
	1.686	1.1685841786595\\
	1.6885	1.1703160892579\\
	1.691	1.17204798771743\\
	1.6935	1.17377987392004\\
	1.696	1.17551174774691\\
	1.6985	1.17724360907848\\
	1.701	1.1789754577944\\
	1.7035	1.18070729377355\\
	1.706	1.18243911689403\\
	1.7085	1.18417092703319\\
	1.711	1.18590272423698\\
	1.7135	1.18763450804494\\
	1.716	1.18936627849889\\
	1.7185	1.19109803547304\\
	1.721	1.19282977884076\\
	1.7235	1.19456150847474\\
	1.726	1.19629322424674\\
	1.7285	1.19802492602783\\
	1.731	1.19894763639919\\
	1.7335	1.19894763639919\\
	1.736	1.19894763639919\\
	1.7385	1.19894763639919\\
	1.741	1.19894763639919\\
	1.7435	1.19894763639919\\
	1.746	1.19894763639919\\
	1.7485	1.19894763639919\\
	1.751	1.19894763639919\\
	1.7535	1.19894763639919\\
	1.756	1.19894763639919\\
	1.7585	1.19894763639919\\
	1.761	1.19894763639919\\
	1.7635	1.19894763639919\\
	1.766	1.19894763639919\\
	1.7685	1.19894763639919\\
	1.771	1.19894763639919\\
	1.7735	1.19894763639919\\
	1.776	1.19894763639919\\
	1.7785	1.19894763639919\\
	1.781	1.19894763639919\\
	1.7835	1.19894763639919\\
	1.786	1.19894763639919\\
	1.7885	1.19894763639919\\
	1.791	1.19894763639919\\
	1.7935	1.19894763639919\\
	1.796	1.19894763639919\\
	1.7985	1.19894763639919\\
	1.801	1.19894763639919\\
	1.8035	1.19894763639919\\
	1.806	1.19894763639919\\
	1.8085	1.19894763639919\\
	1.811	1.19894763639919\\
	1.8135	1.19894763639919\\
	1.816	1.19894763639919\\
	1.8185	1.19894763639919\\
	1.82	1.19894763639919\\
	1.87	1.19894763639919\\
	1.92	1.19894763639919\\
	1.97	1.19894763639919\\
	2.02	1.19894763639919\\
	2.07	1.19894763639919\\
	2.12	1.19894763639919\\
	2.17	1.19894763639919\\
	2.22	1.19894763639919\\
	2.27	1.19894763639919\\
	2.32	1.19894763639919\\
	2.37	1.19894763639919\\
	2.42	1.19894763639919\\
	2.47	1.19894763639919\\
	2.52	1.19894763639919\\
	2.57	1.19894763639919\\
	2.62	1.19894763639919\\
	2.67	1.19894763639919\\
	2.72	1.19894763639919\\
	2.77	1.19894763639919\\
	2.82	1.19894763639919\\
	2.87	1.19894763639919\\
	2.92	1.19894763639919\\
	2.97	1.19894763639919\\
};

\addlegendentry{$\lambda = 3$}

\end{axis}
\end{tikzpicture}%
	\end{center}
	\caption{Information rate vs. $R$ for $\Phi\brc{u} = u^\lambda$. Here, $C_0 = \mac\brc{ P /\sigma_0^2}$. }
	\label{fig:2}
\end{figure}

\section{Conclusions}
Bayesian estimation over a channel with \ac{awgn} whose generative model is described by a \textit{strictly-nonlinear} Gaussian random field shows a first-order phase transition at the capacity of the \ac{awgn} channel: By setting the rate arbitrarily close to the channel capacity, perfect recovery via \ac{mmse} estimation is asymptotically guaranteed; however, by exceeding the channel capacity, the Bayesian estimation becomes uncorrelated. This \textit{all-or-nothing} phenomenon is used to establish a secure communication in a Gaussian wiretap channel. Our investigations demonstrate that a coding scheme, whose encoder constructs the codewords by passing the message through a~Gaussian~random field, asymptotically achieves the secrecy capacity of the Gaussian wiretap channel. 

A natural direction for future work is to develop an approximate message passing algorithm for Bayesian decoding. Using such implementation, the performance of the proposed scheme can be further investigated for finite-length transmissions. The work in this direction is currently ongoing.

\bibliography{ref}
\bibliographystyle{IEEEtran}
\end{document}